\newcolumntype{P}[1]{>{\centering\arraybackslash}p{#1}}
\newcommand{\abs}[1]{ \left\lvert#1\right\rvert}
\newcommand{\norm}[1]{\left\lVert#1\right\rVert}
\DeclarePairedDelimiter\floor{\lfloor}{\rfloor}
\newtheorem{theorem}{Theorem}
\newtheorem{proposition}[theorem]{Proposition}
\newtheorem{lemma}[theorem]{Lemma}
\newtheorem{remark}[theorem]{Remark}
\begin{document}
\author{Anindya Goswami}
\address{IISER Pune, India}
\email{anindya@iiserpune.ac.in}
\author{Kuldip Singh Patel*}\thanks{* Corresponding author}
\address{IIT Patna, India}
\email{kspatel@iitp.ac.in}
\title[Matrix method stability]{Matrix method stability and robustness of compact schemes for parabolic PDEs}
\thanks{Authors acknowledge the support from Government of India for the financial support under the grant no. DST/INT/DAAD/P-12/2020 and 02011-32-2023-R$\&$D-II-13347.}
\begin{abstract}
The fully discrete problem for convection-diffusion equation is considered. It comprises compact approximations for spatial discretization, and Crank-Nicolson scheme for temporal discretization. The expressions for the entries of inverse of tridiagonal Toeplitz matrix, and Gerschgorin circle theorem have been applied to locate the eigenvalues of the amplification matrix. An upper bound on the condition number of a relevant matrix is derived. It is shown to be of order $\mathcal{O}\left(\frac{\delta v}{\delta z^2}\right)$, where $\delta v$ and $\delta z$ are time and space step sizes respectively. Some numerical illustrations have been added to complement the theoretical findings. 
\end{abstract}
\maketitle
{\bf Keywords:} Gerschgorin circle theorem, Inverse of Toeplitz matrix, Compact schemes, Condition number, Convection-diffusion equations.
\section{Introduction}\label{sec:intro}
The convection–diffusion equation is ubiquitous in several phenomena, for example, option pricing problems in stock market \cite{BlaS73}, computational fluid dynamics \cite{roach1976computational}, and in various other physical systems \cite{isenberg1973heat, fattah1985dispersion}. The analytical solution of the convection-diffusion equations is only obtained in a few cases, and it is not available in general. Therefore, a rich theory of numerical methods is essential to solve such problems efficiently and accurately. In literature, various numerical methods, for example finite difference method (FDM), finite element methods, spectral methods, wavelet based method etc., have been developed for solving convection-diffusion equation \cite{AchP05, Can98, MKAW20}. 
\par In fact, high-order accurate FDM can be developed by increasing the number of grid points in a computational stencil. However, the implementation of boundary conditions becomes tedious in those cases. Moreover, the corresponding coefficient matrices in fully discrete problem have more non-zero entries. Therefore, high-order accurate FDMs were developed using compact stencils, which utilizes the same number of grid points and provides better rate of convergence. These are known as a compact schemes and has also been applied to solve convection-diffusion equations \cite{Spotz01, KPMA17}. In these schemes, the non-zero entries of the coefficient matrix are cumbersome but tractable. 
\par The stability analysis of numerical schemes is pervasive in the numerical solution of PDEs, and it has been discussed in great detail by various authors \cite{sousa2001finite, Tref96}. Various approaches have been used in the literature to investigate the stability of finite difference schemes. A few of them are: (i) von Neumann approach, (ii) matrix method, (iii) energy method, (iv) normal mode analysis etc. The von-Neumann approach is suitable for pure initial value problems and problems with periodic boundary conditions. The matrix method, which is applicable for the boundary value problems and also for problems with variable coefficients, involves the estimation of eigenvalues of the amplification matrix. The energy method often leads to the sufficient conditions for stability, however it is often tedious to obtain the bounds in $l^2$ norm. The application of normal mode theory emerges as yet another valuable tool, offering an alternative perspective for assessing the stability of numerical schemes.
As mentioned above, the matrix method seems to apply to the larger class of problems as compared to von-Neumann approach. 

\par In this paper, the stability of the compact schemes for one-dimensional convection-diffusion equations with constant coefficients is studied using the matrix method. The stability of the same scheme has already been proved with less effort by following von Neumann approach in \cite{Spotz95}. 
The literature on matrix method stability analysis for compact schemes is absent even for constant coefficient PDEs. This is because the amplification matrix is not sparse in this case, and the entries of this matrix are intractable. The matrix method analysis involves writing the difference equation in one time-step of the form $U^{n+1} = AU^n$. Then the eigenvalues of the matrix $A$ are estimated, and the region of stability is taken as the region where the spectral radius of $A$ is less than one. However, our objective in this manuscript is twofold. First, the proposed method to estimate the eigenvalues of a complicated matrix is completely novel. Second, the method is applicable to the variety of extensions of the problem taken into consideration in this paper, where other methods have practical limitations. Additionally, the presented analysis provides a novel theoretical approach for estimating the eigenvalues of a complicated matrix, which may have applications in other related fields. 

\par The absence of matrix method stability analysis for compact schemes poses a gap in the literature. To bridge the existing gap, a novel approach is proposed for the matrix method stability analysis utilising the Gerschgorin Circle Theorem (GCT). The crank-Nicolson method is used for temporal semi-discretization, and compact scheme is applied to discretize the space variable, which leads to complicated system of linear equations. The first complexities we encountered is to estimate the eigenvalues of the amplification matrix for the proposed scheme, as it involves matrix inversion. To overcome this, the inversion of relevant Toeplitz matrix \cite{Mallik01} is utilized to locate the eigenvalues. The proposed methodology offers valuable insights into the process of locating these eigenvalues. Another challenge was to comment on the robustness of the proposed scheme, which has direct relation with the condition number of the amplification matrix. The discussion on the condition number of the amplification matrix of compact schemes is absent in the literature to the best of our knowledge. An upper bound on the condition number of the matrix is obtained, which needs to be inverted for computing the amplification matrix. The upper bound is shown to be of order $\mathcal{O}\left(\frac{\delta v}{\delta z^2}\right)$, where $\delta v$ and $\delta z$ are time and space step sizes, respectively. A few numerical experiments are added to validate the assumptions for a wide range of parameter values. Some more numerical experiments are provided to illustrate the theoretical findings. 

\par The present paper is structured as follows: The fully discrete problem for convection-diffusion equation is presented in Sec. \ref{sec:constant_PDE}. The stability of the compact scheme for convection-diffusion equation is proved in Sec. \ref{sec:stability}. Sec. \ref{sec:condition_number} presents the results related to the condition number. Numerical illustrations are given in Sec. \ref{sec:numerical} to support theoretical findings. Sec. \ref{sec:conclu} includes the concluding remarks with some future research directions.

\section{The Fully Discrete Problem} \label{sec:constant_PDE}
\noindent Let $\alpha_1$ and $\alpha_2$ be two constants where $\alpha_2>0$, and $\Omega_x = (x_l, x_r)$ be a finite open interval. Then, a convection-diffusion equation on $\Omega_x$ can be written as follows:
\begin{equation}
\frac{\partial \psi}{\partial v}(v,x)+\alpha_1\frac{\partial \psi}{\partial x}(v,x)-\alpha_2\frac{\partial^2 \psi}{\partial x^2}(v,x)=0,\label{eq:oned_main_equation1}
\end{equation}
where $x \in \Omega_x$, and $0 \leq v \leq T$ for some positive constant $T$. If we take $u = \frac{\alpha_2}{\alpha_1}\psi,$ and $z=\frac{\alpha_1}{\alpha_2}x$ in above equation (\ref{eq:oned_main_equation1}), we have
\begin{equation}
\frac{\partial u}{\partial v}(v,z)+c\frac{\partial u}{\partial z}(v,z)-c\frac{\partial^2 u}{\partial z^2}(v,z)=0,\label{eq:oned_main_equation_2}
\end{equation}
for $z\in \Omega_z=(z_l,z_r)$, and $c = \frac{\alpha_1^2}{\alpha_2}.$ Note that $c$ is positive $\forall$ $\alpha_1$ and positive $\alpha_2$. Moreover, we associate the following initial and boundary conditions with Eq. (\ref{eq:oned_main_equation_2})
\begin{align}
u(0,z)&=f(z), \quad z \in (z_l,z_r)\label{eq:initial_condition_1d}\\
u(v,z_l)&=g_1(v), \quad u(v,z_r)=g_2(v),\quad v> 0, \label{eq:oned_boun}
\end{align}
under the assumptions that $g_1$ and $g_2$ are smooth functions. Further, $g_1(0)=f(z_l)$ and $g_2(0)=f(z_r)$. 
 \par Now, the fully discrete problem for Eq. (\ref{eq:oned_main_equation_2}) is presented. For the sake of simplicity, a uniformly spaced mesh is considered in both temporal and spatial domain. For fixed $N$, we define $z_q =z_l + q\delta z, 0 \leq q \leq N$ for a fixed space step size $\delta z=\frac{z_r-z_l}{N}$. Also for fixed $M$, consider the $m^{th}$ time step as $m\delta v$ for constant time step size $\delta v$ and $m=0,1,...,M$. Let $u^m_q$ denote the solution of (\ref{eq:oned_main_equation_2}) at $m^{th}$ time level and at space grid point $z_q$. Following Eq. (6) in \cite{Spotz01}, the fully discrete problem for (\ref{eq:oned_main_equation_2}) using Crank-Nicolson compact scheme at space grid point $z_q$ and time level $m$ is
\begin{align}\label{eq:oned_fully_discrete_compact}
\Delta_v^+ u_q^m&-\frac{\delta z^2}{12}\left(\Delta_v^+\Delta_z u_q^m-\Delta_v^+\Delta_{zz} u_q^m\right)+\frac{c}{2}\left(\Delta_z u_q^m-\left(1+\frac{\delta z^2}{12}\right)\Delta_{zz} u_q^m\right)\nonumber \\
&+\frac{c}{2}\left(\Delta_z u_q^{m+1}-\left(1+\frac{\delta z^2}{12}\right)\Delta_{zz} u_q^{m+1}\right)=\mathcal{O}(\delta v^2, \delta z^4).
\end{align}
Here $\Delta_v^+u_q^m$, $\Delta_{z}u^m_q$, and $\Delta_{zz}u^m_q$ represent finite difference approximations for first order time derivative, first order space derivative, and second order space derivative of $u$ respectively at $m^{th}$ time level and space grid point $z_q$. The expressions for the same are as follows:
\begin{equation}\label{eq:firstf_2}
 \Delta_v^+u_q^m=\frac{u_{q}^{m+1}-u_{q}^{m}}{\delta v},\:\: \Delta_{z}u^m_q=\frac{u^m_{q+1}-u^m_{q-1}}{2\delta z},\:\: \mbox{and}\:\: \Delta_{zz}u^m_{q}=\frac{u^m_{q+1}-2u^m_{q}+u^m_{q-1}}{\delta z^2}.
\end{equation} 
If $U^m_q$ denotes the approximate value of $u^m_q$, then using relation (\ref{eq:firstf_2}) in (\ref{eq:oned_fully_discrete_compact}) and rearranging the terms, we get the following fully discrete problem for all $1 \leq q \leq N-1$
\begin{align}\label{eq:oned_fully_discrete_compact_1}
&U^{m+1}_{q-1}\left(\frac{2+\delta z}{24\delta v}-\frac{c}{4 \delta z}-\frac{c+\frac{c\delta z^2}{12 }}{2\delta z^2}\right)+U^{m+1}_q\left(\frac{5}{6 \delta v}+\frac{c+\frac{c\delta z^2}{12}}{\delta z^2}\right)+\nonumber\\
&U^{m+1}_{q+1}\left(\frac{2-\delta z}{24\delta v}+\frac{c}{4 \delta z}-\frac{c+\frac{c\delta z^2}{12 }}{2\delta z^2}\right)=U^{m}_{q-1}\left(\frac{2+\delta z}{24\delta v}+\frac{c}{4 \delta z}+\frac{c+\frac{c\delta z^2}{12}}{2\delta z^2}\right)\nonumber\\
&+U^{m}_q\left(\frac{5}{6 \delta v}-\frac{c+\frac{c\delta z^2}{12}}{\delta z^2}\right)+U^{m}_{q+1}\left(\frac{2-\delta z}{24\delta v}-\frac{c}{4 \delta z}+\frac{c+\frac{c\delta z^2}{12}}{2\delta z^2}\right),
\end{align}
with $U^m_0=g_1(v_m)$, and $U^m_N=g_2(v_m)$, for all $m>0$. Now we will prove the stability of the fully discrete problem (\ref{eq:oned_fully_discrete_compact_1}) in the following section.
\section{Stability}\label{sec:stability}
\par In this section, we prove the stability of fully discrete problem (\ref{eq:oned_fully_discrete_compact_1}). Suppose $U^{m}$ denotes the vector  $[U_1^m,U_3^m,\cdots,U_{N-1}^m]^{T}$, where $[\cdot]^{T}$ denotes the transpose of the vector. We also introduce the following constants, depending on $\delta z $ and $\delta v$:
$$c_1= \frac{(2+\delta z)}{24\delta v}, \quad c_2=\frac{5}{6 \delta v}, \quad c_3= \frac{2-\delta z}{24 \delta v},$$ $$y_1=-\frac{c}{4 \delta z}-\frac{c+\frac{c\delta z^2}{12 }}{2\delta z^2},\quad y_2=\frac{c+\frac{c\delta z^2}{12}}{\delta z^2}, \quad \mbox{and}  \quad y_3=\frac{c}{4 \delta z}-\frac{c+\frac{c\delta z^2}{12 }}{2\delta z^2}.$$ 
Set $b:=\frac{\delta v}{\delta z^2}$, to simplify 
\begin{align} \label{y123}
y_1=\frac{-c}{2 \delta v}\left[\left(1+ \frac{\delta z}{2}\right)b+\frac{\delta v}{12}\right], 
\quad y_2=\frac{c}{2 \delta v} \left(2b+\frac{\delta v}{6}\right), \quad \mbox{and} \quad y_3=\frac{-c}{2 \delta v}\left[\left(1-\frac{\delta z}{2}\right)b+\frac{\delta v}{12}\right].
\end{align}
The fully discrete problem (\ref{eq:oned_fully_discrete_compact_1}) can be written as
\begin{equation}\label{eq:xy}
(X+Y)U^{m+1}=(X-Y)U^m+F^m, 
\end{equation}
where 
\begin{equation}\label{eq:matrix}
X\vcentcolon =\left[ 
\begin{array}{c c c c c c c c c} 
  c_2 & c_3 & & & \cdots & & & &\\ 
  c_1 & c_2 & c_3 & & \cdots & & & & \\ 
 & & & & \vdots & & & &\\
 & & & & \cdots & & c_1 & c_2 & c_3 \\ 
 & & & & \cdots & & & c_1 & c_2
\end{array} 
\right], \quad
Y\vcentcolon =\left[ 
\begin{array}{c c c c c c c c c} 
  y_2 & y_3 & & & \cdots & & & &\\ 
 y_1 & y_2 & y_3 & & \cdots & & & & \\ 
 & & & & \vdots & & & &\\
 & & & & \cdots & & y_1 & y_2 & y_3 \\ 
 & & & & \cdots & & & y_1 & y_2
\end{array} 
\right],
\end{equation}
and $$F^m=[(c_1-y_1)g_1(v_m)-(c_1+y_1)g_1(v_{m+1}),0,...,0,(c_3-y_3)g_2(v_m)-(c_3+y_3)g_2(v_{m+1})]^T.$$ 
\noindent We assume that $0< \delta z <2$ now onwards to ensure positivity of the constant $c_3$. Evidently, $X$ is diagonally dominant, and hence invertible. Therefore, using $W:=X^{-1}Y$, Eq. (\ref{eq:xy}) can be rewritten as
\begin{equation}\label{eq:w}
(I+W)U^{m+1}=(I-W)U^{m}+X^{-1}F^m.
\end{equation}
\begin{theorem}\label{theorem:stability}
Assume that the real parts of the eigenvalues of $W$ are positive. Then fully discrete problem (\ref{eq:oned_fully_discrete_compact_1}) is stable.
\end{theorem}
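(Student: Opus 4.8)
The plan is to read off the amplification matrix from the one-step recursion \eqref{eq:w} and to transfer the spectral hypothesis on $W$ to that matrix. First I would note that $\operatorname{Re}(\lambda)>0$ for every eigenvalue $\lambda$ of $W$ excludes $\lambda=-1$, so $I+W$ is invertible. Solving \eqref{eq:w} for $U^{m+1}$ then gives
\begin{equation*}
U^{m+1}=G\,U^{m}+(I+W)^{-1}X^{-1}F^{m},\qquad G:=(I+W)^{-1}(I-W),
\end{equation*}
so that a perturbation introduced at one time level is propagated to the next precisely by $G$; stability is therefore equivalent to a bound on the powers $G^{m}$.

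Second, I would compute the spectrum of $G$ in terms of that of $W$. As $I-W$, $I+W$ and $(I+W)^{-1}$ are all rational functions of $W$, they commute with $W$ and share its eigenvectors; hence $Wv=\lambda v$ implies $Gv=\frac{1-\lambda}{1+\lambda}v$, so the eigenvalues of $G$ are the Cayley transforms $\mu=\frac{1-\lambda}{1+\lambda}$ of those of $W$. The decisive elementary estimate is that $\operatorname{Re}(\lambda)>0$ forces $|\mu|<1$: writing $\lambda=a+ib$ with $a>0$, one finds $|1+\lambda|^{2}-|1-\lambda|^{2}=4a>0$, whence $|1-\lambda|<|1+\lambda|$ and $|\mu|<1$. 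This yields $\rho(G)<1$.

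Finally I would conclude that errors cannot magnify as the scheme marches in time. On any fixed mesh, $\rho(G)<1$ already forces $G^{m}\to 0$, and hence the uniform boundedness of $\{G^{m}\}_{m\ge 0}$ that constitutes the asserted stability. The step I expect to require the most care is exactly the passage from the spectral condition $\rho(G)<1$ to a bound on $\|G^{m}\|$ that is robust under mesh refinement: since $X$ and $Y$ are non-symmetric tridiagonal Toeplitz matrices, $W=X^{-1}Y$, and therefore $G$, are in general non-normal, and for non-normal matrices the spectral radius alone need not control the operator norm of the powers. The cleanest way around this is to diagonalise $W$ using the eigenstructure afforded by its Toeplitz origin and to verify that the resulting eigenvector matrix is uniformly well conditioned, or, equivalently, to exhibit a weighted inner product in which $G$ is a strict contraction; either route upgrades $\rho(G)<1$ to the desired uniform bound.
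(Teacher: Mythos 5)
Your first two steps reproduce the paper's proof exactly: the paper also inverts $I+W$, forms the amplification matrix $H=(I+W)^{-1}(I-W)$ (your $G$), observes that an eigenpair $(\rho,\beta)$ of $W$ gives $H\beta=\frac{1-\rho}{1+\rho}\beta$, and concludes $\bigl|\frac{1-\rho}{1+\rho}\bigr|<1$ from $\operatorname{Re}(\rho)>0$, which is precisely your Cayley-transform estimate. The only divergence is your third step: the paper, following the matrix method of Smith, \emph{defines} stability by the spectral condition on the amplification matrix and stops there, so the non-normality issue you raise (that $\rho(G)<1$ need not control $\|G^m\|$ uniformly under mesh refinement) is simply outside its criterion; your observation is mathematically sound and identifies a real limitation of the paper's notion of stability, but since you flag it without completing the proposed upgrade, your proof is complete only relative to the same spectral criterion the paper uses --- which is exactly what the theorem, as stated in this paper, asserts.
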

\begin{proof}
Since real part of eigenvalues of $W$ are positive, $(I+W)^{-1}$ exists. Therefore, we can rewrite equation \eqref{eq:w} as $$U^{m+1}=HU^{m}+(X+Y)^{-1}F^m,$$ where $H=(I+W)^{-1}(I-W)$. Note that if $\rho$ and $\beta$ is a pair of eigenvalue and eigenvector of $W$, we get
\begin{align*}
H\beta=\frac{1-\rho}{1+\rho}\beta.
\end{align*}
Consequently, $\frac{1-\rho}{1+\rho}$ is an eigenvalue of $H$, the amplification matrix. Hence the fully discrete problem (\ref{eq:oned_fully_discrete_compact_1}) is stable provided $\abs{\frac{1-\rho}{1+\rho}} < 1$ for each eigenvalue $\rho$ of $W$. Since we have assumed that the real part of $\rho$ is positive, the modulus of $1-\rho$ is smaller than that of $1+\rho$. Thus $$\abs{\frac{1-\rho} {1+\rho}}<1.$$ Hence the fully discrete problem (\ref{eq:oned_fully_discrete_compact_1}]) is stable. 
\end{proof}

\begin{remark} \label{remark:N2}
The positivity of real part of eigenvalue of $W$, the assumption in Theorem \ref{theorem:stability}, needs to be verified. For $N=2$, the matrices $X$, $Y$, and $W$ are just scalars, and their values are
$$
X=c_2, \quad Y=y_2, \quad \mbox{and} \quad  W=\frac{y_2}{c_2}= \frac{3 c}{5} \left(2b+\frac{\delta v}{6}\right)>0
$$
using \eqref{y123}. 
Therefore, the assumption in Theorem \ref{theorem:stability} holds true for $N=2$ case.
That assumption is also shown to be true theoretically $N=3$ in Proposition \ref{lemma:eigenless1} with no additional assumption on the model parameters. Hence, this is not an unrealistic assumption. Moreover in Section \ref{sec:numerical}, this assumption is verified numerically for wide range of $N$.
\end{remark}

\par To verify the assumption for $N=3$ case, the elements' expressions of matrix $X^{-1}$ are required. Since $X$ is Toeplitz, we use the expression that appears on pp. $137$ in \cite{Mallik01}.  For all $1 \leq q, q' \leq N-1$ 
\begin{align}\label{eq:tauinv}
(X^{-1})_{q,q'}=
\frac{(-1)^{q-q'}}{\sqrt{c_1c_3}}\left(\sqrt{\frac{c_1}{c_3}}\right)^{q-q'}\frac{p_{(q\wedge q')-1} \times p_{N-1-(q\vee q')} }{p_{N-1}}\left(\frac{c_2}{2\sqrt{c_1c_3}}\right)
\end{align}
where $q\wedge q'$ and $q\vee q' $ are minimum and maximum of $\{q,q'\}$ respectively and 
\[
p_n(x)=(2x)^n\left[1+\sum_{n'=1}^{\floor{n/2}}(-1)^{n'} \binom{n-n'}{n'}\left(\frac{1}{4x^2}\right)^{n'}\right].
\]
Note that, due to the assumption $\delta z <2$, $c_3$ is positive and hence the expression in \eqref{eq:tauinv} is real. In order to locate the eigenvalues of matrix $W$, the following result is borrowed from pp. $61$ in \cite{Smith78}:\\

\noindent {\bf Gerschgorin Circle Theorem:}
Suppose $M_q$ denotes the sum of the modulus of the elements of $q^{th}$ row of a matrix $B=(B_{q,q'})_{1 \leq q, q' \leq N-1}$ by excluding the diagonal element $B_{q,q}$. Then each eigenvalue of the matrix $B$ lies inside 
$\cup_{1\le q\le N-1} D(B_{q,q}, M_q)$
where $D(a,r)$ denotes the disc with center $a$ and radius $r$ on the complex plane.

\begin{proposition}\label{lemma:eigenless1} Assume $N=3$ and $\delta z<2$. If $\rho$ is an eigenvalue of $W$, then the real part of $\rho$ is positive for sufficiently small $\delta z>0$. \end{proposition}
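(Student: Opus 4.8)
The plan is to make everything explicit, since for $N=3$ the index range $1\le q\le N-1$ is just $\{1,2\}$, so $X$, $Y$, and $W$ are merely $2\times 2$ matrices; I would then locate the eigenvalues of $W$ via the Gerschgorin circle theorem rather than by solving the characteristic polynomial. First I would specialize \eqref{eq:tauinv} (or simply invert by hand) to get $X^{-1}=\frac{1}{c_2^2-c_1c_3}\begin{pmatrix}c_2 & -c_3\\ -c_1 & c_2\end{pmatrix}$, where $c_2^2-c_1c_3>0$ because $X$ is strictly diagonally dominant for $\delta z<2$. Forming $W=X^{-1}Y$ then gives $W_{11}=\frac{c_2y_2-c_3y_1}{c_2^2-c_1c_3}$, $W_{12}=\frac{c_2y_3-c_3y_2}{c_2^2-c_1c_3}$, $W_{21}=\frac{c_2y_1-c_1y_2}{c_2^2-c_1c_3}$, and $W_{22}=\frac{c_2y_2-c_1y_3}{c_2^2-c_1c_3}$. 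From \eqref{y123} the relevant signs are $y_2>0$ and $y_1,y_3<0$, so the diagonal entries of $W$ are positive while the off-diagonal entries are negative.

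Since $W$ is real, its two Gerschgorin discs are centred on the real axis, and a disc $D(a,r)$ with real centre $a$ lies in the open right half-plane precisely when $a>r$. Hence the proposition reduces to the two scalar inequalities $W_{11}>\abs{W_{12}}$ and $W_{22}>\abs{W_{21}}$. Multiplying through by the positive quantity $c_2^2-c_1c_3$ and using the row-sum identity $y_1+y_2+y_3=0$ (equivalently $y_2+y_3=-y_1$ and $y_1+y_2=-y_3$), which is immediate from \eqref{y123} and reflects the absence of a zeroth-order term, these two inequalities collapse respectively to $c_2\abs{y_1}>c_3\abs{y_3}$ and $c_2\abs{y_3}>c_1\abs{y_1}$.

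The first inequality holds for every admissible step size, because $\abs{y_1}>\abs{y_3}$ whenever $\delta z>0$ and $c_2>c_3>0$; no smallness is needed. The binding condition, and the step I expect to be the main obstacle, is the second inequality $c_2\abs{y_3}>c_1\abs{y_1}$, equivalently $\frac{c_2}{c_1}>\frac{\abs{y_1}}{\abs{y_3}}$. Here I would substitute $b=\frac{\delta v}{\delta z^2}$ to cancel $c$ and $\delta v$, so that the left side becomes $\frac{20}{2+\delta z}$ and the right side becomes $\frac{1+\delta z/2+\delta z^2/12}{1-\delta z/2+\delta z^2/12}$. As $\delta z\to0^+$ the left side tends to $10$ and the right side tends to $1$; the strict gap $10>1$ then yields the inequality for all sufficiently small $\delta z$ by continuity. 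This is exactly where the hypothesis on $\delta z$ enters: the second Gerschgorin disc does push into the left half-plane as $\delta z\uparrow2$, so this route cannot reach the full range $(0,2)$ and the limiting ratio must be extracted explicitly.

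Finally, I would note that the smallness restriction seems to be an artefact of Gerschgorin rather than of the result itself: for a real $2\times2$ matrix a positive trace together with a positive determinant already forces both eigenvalues to have positive real part (real roots of the same positive sign, or a complex-conjugate pair with real part $\tfrac12\operatorname{tr}W$). Since $\operatorname{tr}W=\frac{2c_2y_2-c_3y_1-c_1y_3}{c_2^2-c_1c_3}$ is manifestly positive and $\det W=\frac{y_2^2-y_1y_3}{c_2^2-c_1c_3}$ can be shown positive for all $\delta z\in(0,2)$, this alternative would establish the conclusion on the whole admissible range, with the Gerschgorin computation retained as the route signposted by the surrounding development.
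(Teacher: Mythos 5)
Your proposal is correct, and while it shares the paper's core tool (Gerschgorin applied to the $2\times 2$ matrix $W$, reducing to ``center exceeds radius'' for each row), the execution differs in ways worth noting. The paper keeps the general Toeplitz-inverse formula \eqref{eq:tauinv} with the polynomials $p_n$ even for $N=3$, writes $a_q-r_q$ in that notation, and argues that $a_q-r_q\to+\infty$ as $(\delta v,\delta z)\to(0,0)$ \emph{with $b=\delta v/\delta z^2$ held fixed}; its smallness threshold is therefore tied to that particular scaling of $\delta v$ with $\delta z$. Your route --- explicit $2\times2$ inversion plus the row-sum identity $y_1+y_2+y_3=0$ --- collapses the two Gerschgorin inequalities to $c_2\abs{y_1}>c_3\abs{y_3}$ and $c_2\abs{y_3}>c_1\abs{y_1}$, in which both $c$ and $\delta v$ cancel, so your threshold is on $\delta z$ alone, uniformly in $\delta v$ (explicitly, the binding inequality reads $18-12\delta z+\delta z^2-\delta z^3/12>0$, valid up to $\delta z\approx1.7$); this matches the proposition's statement ``for sufficiently small $\delta z$'' more faithfully than the paper's fixed-$b$ limit does. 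Your closing observation is also sound and is genuinely stronger than either Gerschgorin argument: $\operatorname{tr}W>0$ follows from the sign pattern, and $\det W=(y_2^2-y_1y_3)/(c_2^2-c_1c_3)$ is positive because $y_2^2-y_1y_3=\frac{c^2}{4\delta v^2}\left[\left(3+\tfrac{\delta z^2}{4}\right)b^2+\tfrac{b\delta v}{2}+\tfrac{\delta v^2}{48}\right]>0$, so for this real $2\times2$ matrix both eigenvalues have positive real part for \emph{all} $\delta z\in(0,2)$ and all $\delta v>0$, removing the smallness hypothesis entirely. The one caveat is that the trace-determinant shortcut is special to $N=3$, whereas the Gerschgorin computation is the template the paper intends to generalize.
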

\begin{proof}
\par Using the fact that $Y$ is tri-diagonal, we can write for $1 \leq q,q' \leq N-1$
\begin{align}\label{eq:wij1}
\nonumber W_{q,q'}=&(X^{-1})_{q,q'-1}Y_{q'-1,q'}+(X^{-1})_{q,q'}Y_{q',q'}+(X^{-1})_{q,q'+1}Y_{q'+1,q'}\\
= &(X^{-1})_{q,q'-1}y_3+(X^{-1})_{q,q'}y_2+(X^{-1})_{q,q'+1}y_1.
\end{align}
\noindent In the above expression we mean $$(X^{-1})_{q,q'}=0, \: \text{if either of} \: q \:\text{or} \: q' \: \text{is not in} \:\{1,\ldots, N-1\}.$$ The fraction  $\frac{c_2}{2\sqrt{c_1c_3}}$ appearing in \eqref{eq:tauinv} has a value $\frac{10}{\sqrt{4-\delta z^2}}$. Thus from \eqref{eq:tauinv}, we have
\begin{align}\label{eq:tauinv1}
(X^{-1})_{q,q'}=&
(-1)^{q-q'} \frac{24 \delta v}{\sqrt{4-\delta z^2}}\left(\sqrt{\frac{2+\delta z}{2-\delta z }} \right)^{q-q'}\frac{p_{(q\wedge q')-1} \times p_{N-1-(q\vee q')} }{p_{N-1}}\left(\frac{10}{\sqrt{4-\delta z^2}} \right).
\end{align}
Since $\delta z<2$, $y_1$ and $y_3$ have an identical sign which is opposite of $y_2$. Moreover from (\ref{eq:tauinv}), $(X^{-1})_{q,q'-1}$ and $(X^{-1})_{q,q'+1}$ are having identical sign which is opposite of $(X^{-1})_{q,q'}$. Hence $(X^{-1})_{q,q'-1}y_1$, $(X^{-1})_{q,q'}y_2$, and $(X^{-1})_{q,q'+1}y_3$ all have an identical sign. Therefore the absolute value of their sum is equal to the sum of their absolute values. Thus for $1\leq q,q' \leq N-1$, we write 
\begin{align}\label{eq:wij2}
\abs{W_{q,q'}}=\abs{(X^{-1})_{q,q'-1}y_3}+\abs{(X^{-1})_{q,q'}y_2}+\abs{(X^{-1})_{q,q'+1}y_1}.
\end{align}
Expressions \eqref{eq:tauinv1}, and \eqref{eq:wij2} are useful in finding the Gershgorin disks $D(a_q,r_q)$ corresponding to the $q^{th}$ row, for locating the eigenvalues of $W$. Indeed, the center $a_q$ and radius $r_q$ for all $1\leq q \leq N-1$ are given by
\begin{align}
a_q&=W_{q,q}=(X^{-1})_{q,q-1}y_3+(X^{-1})_{q,q}y_2+(X^{-1})_{q,q+1}y_1,\label{eq:aq}\\
r_q&=\sum_{q'\neq q}\abs{W_{q,q'}}\label{eq:rq}.
\end{align}
In order to show positivity of the real part of eigenvalues, it is sufficient to show that $a_q > r_q$ for all $1\leq q \leq 2$. To start with $q=1$, we get
\begin{align}
a_1&= (X^{-1})_{1,1}y_2+(X^{-1})_{1,2}y_1 \nonumber\\
&= \frac{24 \delta v}{\sqrt{4-\delta z^2}} \left[\frac{p_{0}p_{1}}{p_{2}}\left(\frac{10}{\sqrt{4-\delta z^2}}\right)y_2-\left(\sqrt{\frac{2+\delta z}{2-\delta z }} \right)^{-1}\frac{p_{0}p_{0}}{p_{2}}\left(\frac{10}{\sqrt{4-\delta z^2}}\right)y_1\right]\nonumber\\
&= \frac{24 \delta v}{\sqrt{4-\delta z^2}} \left[ \frac{y_2 p_{1}} {p_{2}}-\left(\sqrt{\frac{2+\delta z}{2-\delta z }} \right)^{-1} \frac{y_1}{p_{2}}\right] \left(\frac{10}{\sqrt{4-\delta z^2}}\right),\nonumber
\end{align}
using $p_0=1$, and
\begin{align}
\nonumber r_1&=\abs{W_{1,2}}= \abs{(X^{-1})_{1,1}y_3}+\abs{(X^{-1})_{1,2}y_2}\\
&= \frac{24 \delta v}{\sqrt{4-\delta z^2}} \left[ \frac{  p_0p_1}{p_2}\left(\frac{10}{\sqrt{4-\delta z^2}}\right)
\abs{y_3}+ \left(\sqrt{\frac{2+\delta z}{2-\delta z }} \right)^{-1} \frac{  p_0p_0}{p_{2}}\left(\frac{10}{\sqrt{4-\delta z^2}}\right)
\abs{y_2}\right]\nonumber\\
&=\frac{24 \delta v}{\sqrt{4-\delta z^2}}\left[ \frac{-y_{3}p_{1}} {p_{2}} +\left(\sqrt{\frac{2+\delta z}{2-\delta z }} \right)^{-1}\frac{y_{2}} {p_{2}}
\right] \left(\frac{10}{\sqrt{4-\delta z^2}}\right). \nonumber
\end{align}
From above we have $a_1-r_1$ is equal to
\begin{align}\label{a-R}
\left(\frac{24\delta v}{\sqrt{4-\delta z^2}}\right) \left[\frac{y_2 p_1} {p_2}-\left(\sqrt{\frac{2+\delta z}{2-\delta z }} \right)^{-1}\frac{y_1}{p_{2}}+ \frac{y_3p_{1}} {p_{2}} -\left(\sqrt{\frac{2+\delta z}{2-\delta z }} \right)^{-1}\frac{y_2} {p_2}
\right] \left(\frac{10}{\sqrt{4-\delta z^2}}\right). 
\end{align}
Next to prove 
\begin{align} \label{a>R}
    a_1 > r_1
\end{align}
holds for sufficiently small $\delta z$, it is enough to show that $a_1 - r_1\to +\infty$ as $(\delta v, \delta z) \to (0,0)$ by keeping $b$ fixed, because of continuity of \eqref{a-R} w.r.t. $(\delta v, \delta z)$ on $(0,\infty)^2$. By denoting this limit operation as $\displaystyle{\lim_{(\delta v, \delta z) \downarrow (0,0)}}$ and using the expressions \eqref{y123}, and \eqref{a-R}, we have $\displaystyle{\lim_{(\delta v, \delta z) \downarrow (0,0)}} (a_1 - r_1)$ is equal to 
\begin{align*}
\lim_{(\delta v, \delta z) \downarrow (0,0)} \left[(y_2 + y_3)\frac{ p_1} {p_2}-(y_1 + y_2)\frac{1}{p_2}  \right]\left(5\right)
= \lim_{(\delta v, \delta z) \downarrow (0,0)} \left[\left(\frac{c}{2 \delta v} \left(b +\frac{\delta v}{12}\right)\right)\frac{p_1-1} {p_2}  \right]\left(5\right)=\infty,
\end{align*}
as $c>0$, $b>0$, $p_1(5)-1 =9>0$, and $p_2(5)= 99>0$. Thus \eqref{a>R} holds. Similarly, for $q=2$,
\begin{align*}
a_2 &= (X^{-1})_{2,1}y_3+(X^{-1})_{2,2}y_2 \\ 
&= \frac{24 \delta v}{\sqrt{4-\delta z^2}} \left[-\left(\sqrt{\frac{2+\delta z}{2-\delta z }} \right)\left(\frac{p_0 p_0}{p_2}\left(\frac{10}{\sqrt{4-\delta z^2}}\right)\right)y_3+\left(\frac{p_1 p_0}{p_2}\left(\frac{10}{\sqrt{4-\delta z^2}}\right)\right)y_2\right]\\
&= \frac{24 \delta v}{\sqrt{4-\delta z^2}} \left[ \frac{y_2 p_1} {p_2}-\left(\sqrt{\frac{2+\delta z}{2-\delta z }} \right) \frac{y_3}{p_2}\right] \left(\frac{10}{\sqrt{4-\delta z^2}}\right),
\end{align*}
using $p_0=1$, and 
\begin{align*}
\nonumber r_2&=\abs{W_{2,1}}= \abs{(X^{-1})_{2,1}y_2}+\abs{(X^{-1})_{2,2}y_1} \\
&= \frac{24 \delta v}{\sqrt{4-\delta z^2}} \left[\left(\sqrt{\frac{2+\delta z}{2-\delta z }} \right)\frac{  p_0p_0}{p_2}\left(\frac{10}{\sqrt{4-\delta z^2}}\right)
\abs{y_2}+\frac{p_1p_0}{p_{2}}\left(\frac{10}{\sqrt{4-\delta z^2}}\right)
\abs{y_1}\right]\\
&=\frac{24 \delta v}{\sqrt{4-\delta z^2}}\left[ \left(\sqrt{\frac{2+\delta z}{2-\delta z }} \right)\frac{y_2}{p_2} -\frac{y_1 p_1} {p_{2}}
\right] \left(\frac{10}{\sqrt{4-\delta z^2}}\right).
\end{align*}
From above we have, $a_2-r_2$ is equal to
\begin{align}\label{ak-Rk}
\left(\frac{24\delta v}{\sqrt{4-\delta z^2}}\right)\left[ \frac{ y_2 p_1} {p_2}-\left(\sqrt{\frac{2+\delta z}{2-\delta z }} \right) \frac{y_3}{p_{2}}-\left(\sqrt{\frac{2+\delta z}{2-\delta z }} \right) \frac{y_2} {p_{2}}+\frac{y_1p_1} {p_{2}}
\right] \left(\frac{10}{\sqrt{4-\delta z^2}}\right).
\end{align}
Therefore,  $\displaystyle{\lim_{(\delta v, \delta z) \downarrow (0,0)}} (a_2-r_2)$ is equal to 
\begin{align*}
\lim_{(\delta v, \delta z) \downarrow (0,0)} \left[(y_2 + y_1)\frac{ p_1} {p_2}-(y_3 + y_2)\frac{1}{p_2}  \right]\left(5\right)
= \lim_{(\delta v, \delta z) \downarrow (0,0)} \left[\left(\frac{c}{2 \delta v} \left(b +\frac{\delta v}{12}\right)\right)\frac{p_1-1} {p_2}  \right]\left(5\right) = \infty,
\end{align*}
as $c$, $b$, $p_1(5)-1$, and $p_2(5)$ are all positive. Hence, on similar lines of $q=1$ case, we have from above
\begin{align} \label{ak>Rk}
    a_2 > r_2,
\end{align}
for sufficiently small $\delta z$. The result follows from (\ref{a>R}) and (\ref{ak>Rk}).
\end{proof}


\section{Condition Number}\label{sec:condition_number}
Note that the condition number of matrix $(I+W)$ is crucial in precise computation of the matrix $H$. In this section, we obtain an upper bound for the condition number of the matrix $(I+W)$. We first prove an upper bound on $\norm{W}_2$, where $\norm{\cdot}_2$ denotes the spectral norm of a matrix. 
\begin{lemma}\label{lemma:wless1}
If $\delta z <2$, then we have 
\begin{equation}\label{eq:stability1}
\norm{W}_2 < \sqrt{\frac{12 }{5}}\left(\frac{2c}{\delta z^2} +\frac{c}{6} \right)\delta v.
\end{equation}
\end{lemma}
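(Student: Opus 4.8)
The plan is to exploit submultiplicativity of the spectral norm and estimate the two factors of $W=X^{-1}Y$ separately, writing $\norm{W}_2 \le \norm{X^{-1}}_2\,\norm{Y}_2$, and bounding each factor so that the product reproduces the right-hand side of \eqref{eq:stability1}.

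First I would estimate $\norm{Y}_2$. Since $\delta z<2$, the expressions \eqref{y123} show that $y_2>0$ while $y_1,y_3<0$, and a direct computation gives $\abs{y_1}+\abs{y_3}=-(y_1+y_3)=y_2$. Hence every row, and by the Toeplitz structure every column, of $Y$ has moduli of its entries summing to at most $\abs{y_1}+\abs{y_2}+\abs{y_3}=2y_2$, the interior rows attaining this value. Thus $\norm{Y}_1,\norm{Y}_\infty\le 2y_2$, and the standard inequality $\norm{Y}_2\le\sqrt{\norm{Y}_1\,\norm{Y}_\infty}$ yields $\norm{Y}_2\le 2y_2=\frac{2c}{\delta z^2}+\frac{c}{6}$, which is exactly the parenthetical factor in \eqref{eq:stability1}.

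It then remains to prove $\norm{X^{-1}}_2\le\sqrt{12/5}\,\delta v$, and this is the heart of the argument. Because $X$ is not symmetric, its eigenvalues do not control the spectral norm of $X^{-1}$; instead I would work with singular values, using $\norm{X^{-1}}_2=1/\sigma_{\min}(X)=1/\sqrt{\lambda_{\min}(X^TX)}$. Factoring out the common scalar, write $X=\frac{1}{24\delta v}\widetilde X$, where $\widetilde X$ is the tridiagonal Toeplitz matrix with diagonal $20$, subdiagonal $2+\delta z$, and superdiagonal $2-\delta z$. A short computation shows that $\widetilde X^T\widetilde X$ is symmetric and pentadiagonal, with interior diagonal entry $408+2\delta z^2$, first off-diagonal $80$, and second off-diagonal $4-\delta z^2$. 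Applying the Gerschgorin circle theorem to $\widetilde X^T\widetilde X$ bounds its smallest eigenvalue below by the minimum over rows of the diagonal entry minus the sum of the moduli of the off-diagonal entries; for an interior row this equals $(408+2\delta z^2)-\bigl(80+80+(4-\delta z^2)+(4-\delta z^2)\bigr)=240+4\delta z^2\ge 240$, and I would check that the remaining boundary rows give strictly larger values, so that $\lambda_{\min}(\widetilde X^T\widetilde X)\ge 240$ for every $N$. Consequently $\norm{X^{-1}}_2=\frac{24\delta v}{\sqrt{\lambda_{\min}(\widetilde X^T\widetilde X)}}\le\frac{24\delta v}{\sqrt{240}}=\sqrt{12/5}\,\delta v$. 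Multiplying the two estimates gives \eqref{eq:stability1}.

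The main obstacle is the $\norm{X^{-1}}_2$ bound: the non-symmetry of $X$ forces one to pass to $X^TX$ rather than argue with the eigenvalues of $X$ directly, and one must then carry out the routine but slightly tedious computation of the pentadiagonal entries of $\widetilde X^T\widetilde X$ and verify that the Gerschgorin lower bound coming from the interior rows, namely $240$, is not undercut by the boundary rows, so that the estimate is uniform in $N$. The $\norm{Y}_2$ step, by contrast, is elementary once the sign pattern and the identity $\abs{y_1}+\abs{y_3}=y_2$ are observed.
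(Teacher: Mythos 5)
Your proposal is correct and follows essentially the same route as the paper: both bound $\norm{W}_2 \leq \norm{X^{-1}}_2\norm{Y}_2$, estimate $\norm{Y}_2 \leq \sqrt{\norm{Y}_1\norm{Y}_{\infty}} = \frac{2c}{\delta z^2}+\frac{c}{6}$, and control $\norm{X^{-1}}_2$ through the smallest eigenvalue of the pentadiagonal Gram matrix via the Gerschgorin circle theorem (the paper uses $XX^{*}$ rather than your $X^{T}X$ and computes all four disc values explicitly, finding the interior value $\frac{60+\delta z^2}{144\,\delta v^2}$ to be the least, which is exactly your $240+4\delta z^2$ after the $(24\delta v)^{-2}$ rescaling). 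The boundary-row verification you defer does go through under $\delta z<2$, so your argument is complete in substance and differs from the paper's only in bookkeeping.
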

\begin{proof}
As $W=X^{-1}Y$, we obtain upper bounds for $\norm{X^{-1}}_2$ and $\norm{Y}_2$ below. Due to the invertibility of $X^{-1}$, all its singular values are positive. Furthermore, since $(X^{-1})^*X^{-1}=(XX^*)^{-1}$, $s$ is a singular value of $X^{-1}$ iff $s^2$ is an eigenvalue of $(XX^*)^{-1}$, where $X^*$ denotes the transpose of $X$.
Equivalently, $s^{-2}$ is an eigenvalue of the positive definite matrix $Z:=XX^*$, whose entries are as follows
\begin{equation}\label{eq:z}
Z_{q,q'}= \left\{\begin{array}{ll}
\sum_{i=1}^3c_i^2 &\textrm{ if }1 < q'=q < N-1\\
\sum_{i=2}^3c_i^2 &\textrm{ if }q'=q=1 \\
\sum_{i=1}^2c_i^2 &\textrm{ if }q'=q =N-1\\
c_2(c_1+c_3) &\textrm{ if } |q'-q| =1\\
c_1c_3 &\textrm{ if } |q'-q| =2\\
0 &\textrm{ else},
\end{array}\right.
\end{equation}
for all $1 \leq q,q' \leq N-1$. Thus
\begin{equation}
\begin{aligned}
\label{eq:realnormx}
\norm{X^{-1}}_2 = & \max\{s\mid s \textrm{ is a singular value of } X^{-1} \}\\
=& \left(\max\{s^2\mid s^{-2} \textrm{ is an eigenvalue of }Z \}\right)^{\frac 1 2}\\
=&\frac{1}{\left(\min\{s^{-2}\mid s^{-2} \textrm{ is an eigenvalue of }Z \}\right)^{\frac 1 2}}\\
=& \frac{1}{\sqrt{\rho_{min}(Z)}},
\end{aligned}
\end{equation}
where $\rho_{min}(Z)$ is the minimum of the eigenvalue of $Z$. A lower bound of spectrum of $Z$ will be obtained using Gerschgorin’s Circle Theorem (GCT). To facilitate the application of the theorem, the centers and radius of Gerschgorin's disks are calculated below. From \eqref{eq:z}, it is clear that we need to consider four different discs, namely $$D\left(\sum_{i=2}^3c_i^2, c_2(c_1+c_3)+c_1c_3\right), \quad \quad D\left(\sum_{i=1}^3c_i^2, 2c_2(c_1+c_3)+c_1c_3\right),$$ $$D\left(\sum_{i=1}^3c_i^2, 2c_2(c_1+c_3)+2c_1c_3\right),\textrm{ and } D\left(\sum_{i=1}^2c_i^2,c_2(c_1+c_3)+c_1c_3\right).$$ 
The values of $(a-r)$ for all four discs are $$\frac{1}{\delta v^2}\left(\frac{5}{9}-\frac{(2-\delta z)\delta z}{288}\right), \quad \frac{1}{\delta v^2}\left(\frac{4}{9}-\frac{(4-\delta z^2)}{192}\right),$$ $$\frac{1}{\delta v^2}\left(\frac{4}{9}-\frac{(4-\delta z^2)}{144}\right), \textrm{ and } \frac{1}{\delta v^2}\left(\frac{5}{9}+\frac{(2+\delta z)\delta z}{288}\right),$$ respectively. Since the third member is the least, by applying GCT we get 
\begin{align*}
\rho_{min}(Z) \geq \frac{1}{\delta v^2}\left(\frac{4}{9}-\frac{(4-\delta z^2)}{144}\right) =\frac{60 +\delta z^2}{(12\delta v)^2} > \frac{5}{12 (\delta v)^2}.
\end{align*}
Therefore, 
\begin{equation}
\label{eq:normX}
\norm{X^{-1}}_2 < \sqrt{\frac{12}{5}}\delta v.
\end{equation}
Further, using \eqref{y123} we have
\begin{equation}
\begin{aligned}\label{eq:yinnorm}
    \norm{Y}_{\infty}&=\max_{1 \leq q\leq N-1}\sum_{q'=1}^{N-1}\abs{Y_{q,q'}}\\&=\abs{y_1}+\abs{y_2}+\abs{y_3}\\
    & =\frac{1}{\delta v}\left[ \frac{bc}{2}+\frac{c\delta v}{24}+bc+\frac{c\delta v}{12}+\frac{bc}{2}+\frac{c\delta v}{24}\right]\\&=\frac{2bc}{\delta v} +\frac{c}{6},
\end{aligned}
\end{equation}
since $\delta z <2$. Similarly, we have
\begin{equation}
\label{eq:y1norm}
\norm{Y}_1=\max_{1 \leq q'\leq N-1}\sum_{q=1}^{N-1}\abs{Y_{q,q'}} =\abs{y_1}+\abs{y_2}+\abs{y_3} = \frac{2bc}{\delta v} +\frac{c}{6}.
\end{equation}
Using the relation between matrix norms, 
\begin{equation}
\label{eq:normY}
\norm{Y}_2 \leq \sqrt{\norm{Y}_1\norm{Y}_{\infty}},
\end{equation}
we have
$$\norm{Y}_2 \leq \left(\frac{2bc}{\delta v} +\frac{c}{6} \right).$$
\noindent Given $W=X^{-1}Y$, and $b=\frac{\delta v}{\delta z^2}$, we have
\begin{align*}
\norm{W}_2 \leq \norm{X^{-1}}_2\norm{Y}_2 < \sqrt{\frac{12 }{5}}\left(\frac{2c}{\delta z^2} +\frac{c}{6} \right)\delta v.
\end{align*}
\end{proof}

\begin{table}
\caption{The expressions for $a_q$ and $r_q$ from (\ref{eq:aq})-(\ref{eq:rq}) for $4\le q\le 6$, and verification of $a_q-b_q>0$, $4\le q\le 6$, with parameters $\alpha_1=0.25$, $\alpha_2=0.1$, $x_l=0$, $x_r=1$, $\delta v=0.1$, and $\delta z=1/N$.}
  \centering
    \begin{tabular}{|c|p{12cm}|c|}
        \hline
        & \centering{$a_q$ and $r_q$} & $a_q-r_q$\\
        \hline \hline
        &&\\
        $N=4$& $a_1=(X^{-1})_{1,1}y_2+(X^{-1})_{1,2}y_1,$ & 2.02\\
        & $r_1=|(X^{-1})_{1,1}y_3|+|(X^{-1})_{1,2}|(|y_2|+|y_3|)+|(X^{-1})_{1,3}|(|y_1|+|y_2|)$&\\
        &&\\
        & $a_2=(X^{-1})_{2,1}y_3+(X^{-1})_{2,2}y_2+(X^{-1})_{2,3}y_1,$ &2.58 \\
        & $r_2=|(X^{-1})_{2,1}y_2|+|(X^{-1})_{2,2}|(|y_1|+|y_3|)+|(X^{-1})_{2,3}y_2|$ &\\
        &&\\
        & $a_3=(X^{-1})_{3,2}y_3+(X^{-1})_{3,3}y_2,$ &1.87\\
        & $r_3=|(X^{-1})_{3,1}|(|y_2|+|y_3|)+|(X^{-1})_{3,2}|(|y_1|+|y_2|)+|(X^{-1})_{3,3}y_1|$ &\\
        \hline
        &&\\
        $N=5$&  $a_1=(X^{-1})_{1,1}y_2+(X^{-1})_{1,2}y_1,$ & 3.14\\
        &$r_1=|(X^{-1})_{1,1}y_3|+|(X^{-1})_{1,2}|(|y_2|+|y_3|)+|(X^{-1})_{1,3}|(|y_1|+|y_2|+|y_3|)+|(X^{-1})_{1,4}|(|y_1|+|y_2|)
        $ &\\
        &&\\
        & $a_2=(X^{-1})_{2,1}y_3+(X^{-1})_{2,2}y_2+(X^{-1})_{2,3}y_1,$ &4.16\\ 
        & $r_2=|(X^{-1})_{2,1}y_2|+|(X^{-1})_{2,2}|(|y_1|+|y_3|)+|(X^{-1})_{2,3}|(|y_2|+|y_3|)+|(X^{-1})_{2,4}|(|y_1|+|y_2|)$ &\\
        &&\\
        & $a_3=(X^{-1})_{3,2}y_3+(X^{-1})_{3,3}y_2+(X^{-1})_{3,4}y_1,$ &4.11\\
        &  $r_3=|(X^{-1})_{3,1}|(|y_2|+|y_3|)+|(X^{-1})_{3,2}|(|y_1|+|y_2|)+|(X^{-1})_{3,3}|(|y_1|+|y_3|)+|(X^{-1})_{3,4}y_2|$& \\
        &&\\
        & $a_4=(X^{-1})_{4,3}y_3+(X^{-1})_{4,4}y_2$ &2.95\\
         &$r_4=|(X^{-1})_{4,1}|(|y_2|+|y_3|)+|(X^{-1})_{4,2}|(|y_1|+|y_2|+|y_3|)+|(X^{-1})_{4,3}|(|y_1|+|y_2|)+|(X^{-1})_{4,4}y_1|$ &\\
        \hline
          &&\\
        $N=6$&  $a_1=(X^{-1})_{1,1}y_2+(X^{-1})_{1,2}y_1,$ & 4.49\\
        &$r_1=|(X^{-1})_{1,1}y_3|+|(X^{-1})_{1,2}|(|y_2|+|y_3|)+|(X^{-1})_{1,3}|(|y_1|+|y_2|+|y_3|)+|(X^{-1})_{1,4}|(|y_1|+|y_2|+|y_3|)
        +|(X^{-1})_{1,5} |(|y_1|+|y_2|)$ &\\
        &&\\
        & $a_2=(X^{-1})_{2,1}y_3+(X^{-1})_{2,2}y_2+(X^{-1})_{2,3}y_1,$ &5.96\\ 
        & $r_2=|(X^{-1})_{2,1}y_2|+|(X^{-1})_{2,2}|(|y_1|+|y_3|)+|(X^{-1})_{2,3}|(|y_2|+|y_3|)+|(X^{-1})_{2,4}|(|y_1|+|y_2|+|y_3|)+|(X^{-1})_{2,5}|(|y_1|+|y_2|)$ &\\
        &&\\
        & $a_3=(X^{-1})_{3,2}y_3+(X^{-1})_{3,3}y_2+(X^{-1})_{3,4}y_1,$ &6.05\\
        &  $r_3=|(X^{-1})_{3,1}|(|y_2|+|y_3|)+|(X^{-1})_{3,2}|(|y_1|+|y_2|)+|(X^{-1})_{3,3}|(|y_1|+|y_3|)+|(X^{-1})_{3,4}|(|y_2|+|y_3|)+|X^{-1}_{3,5}|(|y_1|+|y_2|)$& \\
        &&\\
        & $a_4=(X^{-1})_{4,3}y_3+(X^{-1})_{4,4}y_2+(X^{-1})_{4,5}y_1$ &5.91\\
         &$r_4=|(X^{-1})_{4,1}|(|y_2|+|y_3|)+|(X^{-1})_{4,2}|(|y_1|+|y_2|+|y_3|)+|(X^{-1})_{4,3}|(|y_1|+|y_2|)+|(X^{-1})_{4,4}|(|y_1|+|y_3|)+|(X^{-1})_{4,5}y_2|$ &\\
        &&\\
        & $a_5=(X^{-1})_{5,4}y_3+(X^{-1})_{5,5}y_2$ &4.26\\
         &$r_5=|(X^{-1})_{5,1}|(|y_2|+|y_3|)+|(X^{-1})_{5,2}|(|y_1|+|y_2|+|y_3|)+|(X^{-1})_{5,3}|(|y_1|+|y_2|+|y_3|)+|(X^{-1})_{5,4}|(|y_1|+|y_2|)+|(X^{-1})_{5,5}y_1|$ &\\
        \hline
    \end{tabular}
    \label{tab:aq_bq}
\end{table}

\begin{theorem}\label{theorem:cond}
If real part of eigenvalues of $W$ are positive, then the condition number of the matrix $(I+W)$ is $\mathcal{O}\left(\frac{\delta v}{\delta z^2}\right)$.
\end{theorem}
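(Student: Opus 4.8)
The plan is to use $\kappa_2(I+W)=\norm{I+W}_2\,\norm{(I+W)^{-1}}_2$ and to bound the two factors separately, showing that the first grows like $\tfrac{\delta v}{\delta z^2}$ while the second stays bounded by an absolute constant. For the first factor, the triangle inequality gives $\norm{I+W}_2\le 1+\norm{W}_2$, and Lemma \ref{lemma:wless1} immediately yields
\[
\norm{I+W}_2\le 1+\sqrt{\tfrac{12}{5}}\left(\frac{2c}{\delta z^2}+\frac{c}{6}\right)\delta v,
\]
whose dominant term is of order $\tfrac{\delta v}{\delta z^2}$. This step is routine.

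The crux is to show $\norm{(I+W)^{-1}}_2=\mathcal{O}(1)$. I want to stress that the eigenvalue hypothesis alone only guarantees invertibility of $I+W$ and, since $W$ is not normal, does \emph{not} by itself control the spectral norm of the inverse. I would instead exploit the factorization $I+W=X^{-1}(X+Y)$, so that $(I+W)^{-1}=(X+Y)^{-1}X$ and hence $\norm{(I+W)^{-1}}_2\le\norm{(X+Y)^{-1}}_2\,\norm{X}_2$. The row and column sums of $X$ give $\norm{X}_1=\norm{X}_\infty=c_1+c_2+c_3=\tfrac{1}{\delta v}$, whence $\norm{X}_2\le\sqrt{\norm{X}_1\norm{X}_\infty}=\tfrac{1}{\delta v}$. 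For $(X+Y)^{-1}$ I would apply a Varah-type estimate, in the same spirit as the Gerschgorin arguments already used in the paper: if a matrix $A$ is diagonally dominant with gap $\delta_0=\min_q\big(\abs{A_{q,q}}-\sum_{q'\neq q}\abs{A_{q,q'}}\big)>0$ by rows and by columns, then $\norm{A^{-1}}_\infty\le \delta_0^{-1}$ and $\norm{A^{-1}}_1\le\delta_0^{-1}$, so that $\norm{A^{-1}}_2\le\sqrt{\norm{A^{-1}}_1\norm{A^{-1}}_\infty}\le\delta_0^{-1}$.

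Everything then reduces to the diagonal dominance gap of the tridiagonal matrix $X+Y$, whose diagonal, sub- and super-diagonal entries are $c_2+y_2$, $c_1+y_1$, and $c_3+y_3$. The two structural identities that make this work are $c_1+c_2+c_3=\tfrac{1}{\delta v}$ and $y_1+y_2+y_3=0$, together with $y_1,y_3<0<y_2$ for $\delta z<2$. Using $\abs{c_i+y_i}\le\abs{c_i}+\abs{y_i}$ on the off-diagonals, the interior dominance gap is bounded below by $(c_2-c_1-c_3)+(y_1+y_2+y_3)=\tfrac{2}{3\delta v}$, and the boundary rows (with only one off-diagonal term) give a strictly larger gap; the same holds columnwise. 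Hence $\norm{(X+Y)^{-1}}_2\le\tfrac{3\delta v}{2}$ and $\norm{(I+W)^{-1}}_2\le\tfrac{3}{2}$, and combining the two factors gives $\kappa_2(I+W)\le\tfrac{3}{2}\big(1+\norm{W}_2\big)=\mathcal{O}\!\big(\tfrac{\delta v}{\delta z^2}\big)$.

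I expect the main obstacle to be precisely this denominator bound: establishing a lower bound of order $\tfrac{1}{\delta v}$ on the dominance gap of $X+Y$, because the large diffusion contributions (of order $\tfrac{c}{\delta z^2}$) sit in both the diagonal and the off-diagonals and must be shown to cancel, leaving a gap governed only by the time-stepping terms. That cancellation is exactly the identity $y_1+y_2+y_3=0$, which reflects that the spatial difference operator annihilates constants; verifying the signs $y_1,y_3<0<y_2$ and disposing of the less constrained boundary rows are the remaining bookkeeping points. A secondary, purely expository, issue is the reading of the $\mathcal{O}$ statement: the estimate is $\mathcal{O}\!\big(\tfrac{\delta v}{\delta z^2}\big)$ in the refinement regime where $\tfrac{\delta v}{\delta z^2}$ stays bounded away from $0$, which is the relevant regime for the unconditionally stable Crank--Nicolson discretization.
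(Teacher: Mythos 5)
Your proposal is correct, and at the crucial step it takes a genuinely different --- and in fact more watertight --- route than the paper. Both proofs treat the numerator identically: $\norm{I+W}_2 \le 1+\norm{W}_2$ combined with Lemma \ref{lemma:wless1}. The divergence is in bounding $\norm{(I+W)^{-1}}_2$. The paper argues that every eigenvalue of $(I+W)^{-1}$ has the form $\frac{1}{1+\lambda}$ with $\mathrm{Re}(\lambda)>0$, hence modulus less than $1$, and from this concludes $\norm{(I+W)^{-1}}_2<1$; that last inference equates spectral radius with spectral norm, which is justified only for normal matrices, and $W=X^{-1}Y$ is not normal in general (the diagonal similarities that symmetrize $X$ and $Y$ are different, since $c_1/c_3=\frac{2+\delta z}{2-\delta z}$ while $y_1/y_3$ carries the extra $\frac{\delta v}{12}$ terms, so no single similarity symmetrizes $W$). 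Your argument bypasses exactly this weak point: the factorization $(I+W)^{-1}=(X+Y)^{-1}X$, the identities $c_1+c_2+c_3=\frac{1}{\delta v}$ and $y_1+y_2+y_3=0$ together with the signs $y_1,y_3<0<y_2$, and the Varah/Gerschgorin bound on the row- and column-diagonally-dominant matrix $X+Y$ yield $\norm{(X+Y)^{-1}}_2\le\frac{3\delta v}{2}$, $\norm{X}_2\le\frac{1}{\delta v}$, and hence $\norm{(I+W)^{-1}}_2\le\frac{3}{2}$; your computations check out (the interior dominance gap is indeed $(c_2-c_1-c_3)+(y_1+y_2+y_3)=\frac{2}{3\delta v}$, the boundary rows and columns have strictly larger gaps, and $\norm{X}_1=\norm{X}_\infty=\frac{1}{\delta v}$). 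What your approach buys: a bound on the inverse that does not rely on normality, an explicit constant, and --- notably --- no use of the eigenvalue hypothesis at all, since diagonal dominance of $X+Y$ already gives invertibility under the paper's standing assumption $\delta z<2$. What the paper's shorter argument buys, when its normality-dependent step can be justified, is only the marginally sharper constant $1$ in place of $\frac{3}{2}$; as written, your route is the one that fully proves the stated theorem.
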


\begin{proof}
Note that if $\lambda$ is an eigenvalue of $W$, then $\frac{1}{1+\lambda}$ is an eigenvalue of $(I+W)^{-1}$. Given that real part of $\lambda$ is positive, we have $\abs{\frac{1}{1+\lambda}}<1$. It gives $\norm{(I+W)^{-1}}_2<1$. Thus condition number of matrix $(I+W)=\norm{(I+W)}_2\norm{(I+W)^{-1}}_2$, which is upper bounded by $1+\norm{W}_2$. Using Lemma \ref{lemma:wless1}, an upper bound on condition number of $(I+W)$ is $1+\sqrt{\frac{12 }{5}}\left(\frac{2c}{\delta z^2} +\frac{c}{6} \right)\delta v$. 
\end{proof}

\begin{table}
		\caption{The minimum values of real part of eigenvalues of $W$ for the parameters $\alpha_1=0.25$, $\alpha_2=0.1$, $T=2$, $x_l=0$, $x_r=1$, and various values of $\delta z$ and $\delta v$.}
	\begin{tabular}{|c||P{1.4cm}|P{1.5cm}|P{1.5cm}|P{1.5cm}|P{1.5cm}|P{1.5cm}|P{1.5cm}|P{1.5cm}|P{1.8cm}}
    \hline
        \backslashbox{$\delta z$}{$\delta v$} & $10^{-8}$ & $10^{-7}$ & $10^{-6}$ & $10^{-5}$ & $10^{-4}$ & $10^{-3}$ & $10^{-2}$ & $10^{-1}$ \\
		\hline\hline
		$1/8$ & 3.16e-08 & 3.16e-07 & 3.16e-06 & 3.16e-05 & 3.16e-04 & 3.16e-03 & 3.16e-02 & 3.16e-01 \\
		\hline
		$1/16$ & 3.16e-08 & 3.16e-07 & 3.16e-06 & 3.16e-05 & 3.16e-04 & 3.16e-03 & 3.16e-02 & 3.16e-01 \\
		\hline
		1/32 & 3.16e-08 & 3.16e-07 & 3.16e-06 & 3.16e-05 & 3.16e-04 & 3.16e-03 & 3.16e-02 & 3.16e-01 \\
		\hline
		1/64 & 3.16e-08 & 3.16e-07 & 3.16e-06 & 3.16e-05 & 3.16e-04 & 3.16e-03 & 3.16e-02 & 3.16e-01 \\
		\hline 
		1/128 & 3.16e-08 & 3.16e-07 & 3.16e-06 & 3.16e-05 & 3.16e-04 & 3.16e-03 & 3.16e-02 & 3.16e-01 \\
		\hline
		1/256 & 3.16e-08 & 3.16e-07 & 3.16e-06 & 3.16e-05 & 3.16e-04 & 3.16e-03 & 3.16e-02 & 3.16e-01 \\
		\hline 
		1/512 & 3.16e-08 & 3.16e-07 & 3.16e-06 & 3.16e-05 & 3.16e-04 & 3.16e-03 & 3.16e-02 & 3.16e-01 \\
		\hline
		1/1024 &3.16e-08 & 3.16e-07 & 3.16e-06 & 3.16e-05 & 3.16e-04 & 3.16e-03 & 3.16e-02 & 3.16e-01 \\
		\hline
		1/2048 & 3.16e-08 & 3.16e-07 & 3.16e-06 & 3.16e-05 & 3.16e-04 & 3.16e-03 & 3.16e-02 & 3.16e-01 \\
		\hline
		1/4096 & 3.16e-08 & 3.16e-07 & 3.16e-06 & 3.16e-05 & 3.16e-04 & 3.16e-03 & 3.16e-02 & 3.16e-01 \\
		\hline
	 	\end{tabular}
	\label{table:eigen}
\end{table}

\section{Numerical Illustrations}\label{sec:numerical}
\par This section presents numerical experiments to illustrate the results of the preceding section and to verify the assumptions numerically. We proved theoretically in Remark \ref{remark:N2}, and Proposition \ref{lemma:eigenless1} that real part of eigenvalues of $W$ are positive for $N=2$, and $3$ respectively. Further, the expressions of $a_q$'s, and $b_q$'s are derived in Table \ref{tab:aq_bq} for $N=4,5,$ and $6,$ in terms of the entries of matrices $X^{-1}$ and $Y$. It is observed that for each values of $N,$ $(a_q-b_q)$'s are positive, i.e. the real part of the eigenvalues are positive. Hence, the Crank-Nicolson compact scheme (\ref{eq:xy}) for solving convection-diffusion equation is stable for the parameters considered in Table \ref{tab:aq_bq}. It is observed from the third column of Table \ref{tab:aq_bq} that $a_q-r_q$ is lowest for the last disk for $4\le N\le 6$. Further study is needed to verify if this assertion is true for general $N$. If so, the estimation of $a_q-r_q$ for the last disk will be sufficient to conclude the stability result. Additionally, for higher values of $N$, the positivity of real parts of eigenvalues of $W$ is verified numerically. To this end, Table \ref{table:eigen} presents the minimum values of the real parts of eigenvalues of $W$, and it is evident that these are positive for wide range of $N$.


\begin{figure}[htbp]
  \centering
 \includegraphics[width=9 cm]{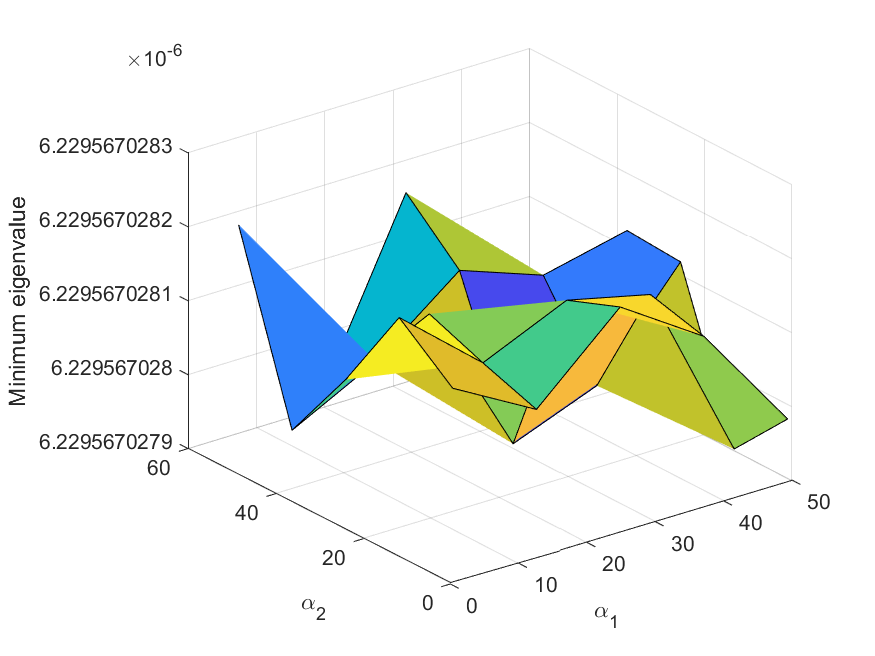}
  \caption{The minimum eigenvalue of $W$ for various values of convection and diffusion coefficients $\alpha_1$ and $\alpha_2$.}
  \label{fig:min}
\end{figure}

\begin{table}[h!]
\caption{The values of an upper bound on $\|X^{-1}\|_2$ using (\ref{eq:normX}) with parameters $\alpha_1=0.25$, $\alpha_2=0.1$, $T=1,$ $x_l=0$, and $x_r=1$. The entries in fourth column are obtained from (\ref{eq:realnormx}).}
    \centering
    \begin{tabular}{|c|c|c|c|}
    \hline
        $N$ & $M$ & Upper bound on &   $\|X^{-1}\|_{2}$ \\
        & & $\|X^{-1}\|_{2},$ using (\ref{eq:normX})&\\
    \hline \hline 
        25 & 800 & $968.25\times10^{-6}$ & $935.66\times 10^{-6}$ \\
        50 & 3200 & $484.12\times 10^{-6}$ & $468.52\times 10^{-6}$\\
        100 & 12800 & $121.03\times 10^{-6}$ & $117.17\times 10^{-6}$ \\
        200 & 51200 & $302.58\times 10^{-7}$ & $292.96\times10^{-7}$\\
        400 & 204800 & $756.44\times 10^{-8}$ & $732.42\times 10^{-8}$ \\
        800 & 819200 & $189.11\times10^{-8}$ & $183.05\times 10^{-8}$ \\
    \hline
    \end{tabular}
    \label{tab:normX}
\end{table}

\begin{table}
    \caption{The values of an upper bound on $\|Y\|_2$ with parameters $\alpha_1=0.25$, $\alpha_2=0.1$, $x_l=0$, and $x_r=1$, where $\|Y\|_1,$ $\|Y\|_2,$ and  $\|Y\|_{\infty}$ are usual notation of matrix norms. The entries in fifth column are $\|Y\|_2=\sqrt{\lambda_{max}(Y^*Y)}$, where $\lambda_{max}$ is the maximum eigenvalue of $Y^{*}Y$.}
    \centering
    \begin{tabular}{|c|c|c|c|c|}
    \hline
        $N$ & $\|Y\|_{\infty}$ &   $\|Y\|_{1}$ & Upper bound on  &  $\|Y\|_2$\\
        & using (\ref{eq:yinnorm}) & using (\ref{eq:y1norm}) & $\|Y\|_2$ using (\ref{eq:normY})& \\
    \hline \hline 
        25 & 781.35 & 781.35 & 781.35 & 778.27\\
        50 & 3125.10 & 3125.10 & 3125.10 & 3122.02\\
        100 & 12500.10  & 12500.10 & 12500.10 & 12497.02\\
        200 & 50000.10 &  50000.10 &  50000.10 & 49997.02\\
        400 & 200000.10 & 200000.10 & 200000.10 & 199997.02\\
        800 & 800000.10 & 800000.10 & 800000.10 & 799997.02\\
    \hline
    \end{tabular}
    \label{tab:normY}
\end{table}


\par It is obvious to observe that the convection and diffusion coefficients $\alpha_1$ $\&$ $\alpha_2$ have influence over the eigenvalues of $W$. To see this, the minimum values of real parts of eigenvalues of $W$ are plotted in Fig. \ref{fig:min} for various values of $\alpha_1$, $\alpha_2$. Note that $\delta v=\sqrt{\frac{5 }{12}} \Big/ \left(\frac{2c}{\delta z^2} +\frac{c}{6} \right)$ is chosen according to Lemma \ref{lemma:wless1}, and $\delta z$ is fixed as $\frac{1}{512}$ for this computation. It is clear from Fig. \ref{fig:min} that the real parts of eigenvalues of $W$ are positive for various values of $\alpha_1$ and $\alpha_2$.

\par Given the matrix $X$ in (\ref{eq:matrix}), a theoretical upper bound on $\|X^{-1}\|_2$ is derived in (\ref{eq:normX}) in terms of $\delta v$. Using this, the numerical values of an upper bound on $\|X^{-1}\|_2$ for the parameters $\alpha_1=0.25$, $\alpha_2=0.1$, $T=1,$ $x_l=0$, and $x_r=1$ are computed. These values are listed in third column of Table \ref{tab:normX} for various values of $M$ and $N$. The entries in the fourth column of the Table \ref{tab:normX} are obtained from (\ref{eq:realnormx}). It is evident that upper bound is reasonably sharp for the given set of parameters. This bound helps us to find an expression of the upper bound on the condition number of $(I+W)$ in terms of discretization parameters.

\par In a similar way, the values of an upper bound on $\|Y\|_2$ are presented in Table \ref{tab:normY} for parameters $\alpha_1=0.25$, $\alpha_2=0.1$, $x_l=0$, and $x_r=1$. Given matrix $Y$ in (\ref{eq:matrix}), the entries of second and third column in Table \ref{tab:normY} are computed using (\ref{eq:yinnorm}) and (\ref{eq:y1norm}), respectively for various values of $N$. The upper bound of $\|Y\|_2$ is obtained using (\ref{eq:normY}) and listed in fourth column of Table \ref{tab:normY}. The entries in fifth column are obtained from the expression $\|Y\|_2=\sqrt{\lambda_{max}(Y^*Y)}$, where $\lambda_{max}$ is the maximum eigenvalue of $Y^{*}Y$. The upper bound on $\|Y\|_2$ also plays an important role in studying the condition number of $(I+W).$

\par Note that the condition number of $(I+W)$, i.e. $\kappa(I+W)$, is crucial in solving (\ref{eq:w}). Utilizing the upper bounds on $\|X^{-1}\|_2$ and $\|Y\|_2,$ an upper bound on the condition number of $(I+W)$ is derived in Lemma \ref{lemma:wless1}. The upper bound on $\kappa(I+W)$ is obtained for parameters $\alpha_1=0.25$, $\alpha_2=0.1$, $T=1,$ $x_l=0$, and $x_r=1$. The values are listed in Table \ref{tab:contion_num} for various choices of $N$ and $M$ so that $\frac{\delta v}{\delta z^2}$ is constant. It is observed that the upper bound on the condition number is reasonably small, which asserts the robustness of the proposed numerical scheme. Moreover, in view of the last column of Table \ref{tab:contion_num}, the derived upper bound is reasonably sharp also.

\begin{table}[h!]
     \caption{Upper bound for $\kappa(I+W)$, i.e. condition number of $(I+W)$, with parameters $\alpha_1=0.25$, $\alpha_2=0.1$, $T=1,$ $x_l=0$, and $x_r=1$. The condition numbers given in the last column are computed using MATLAB.}
    \centering
    \begin{tabular}{|c|c|c|c|c|c|}
    \hline
         $N$ & $M$ & Upper bound & Upper bound & Upper bound for & $\kappa(I+W)$ \\
        & & of $\|X^{-1}\|_{2}$ & of $\|Y\|_2$  & $\kappa(I+W)$&  \\
        & & & & from Theorem \ref{theorem:cond}&\\
    \hline \hline 
        25 & 800 & $968.25\times10^{-6}$ & 781.35 & 1.76 & 1.72\\
        50 & 3200 & $484.12\times 10^{-6}$ & 3125.10  & 2.51 & 2.46\\
        100 & 12800 & $121.03\times 10^{-6}$ & 12500.10 & 2.51 & 2.46\\
        200 & 51200 & $302.58\times 10^{-7}$ & 50000.10 & 2.51 & 2.46\\
        400 & 204800 & $756.44\times 10^{-8}$ & 200000.10 & 2.51 & 2.46\\
        800 & 819200 & $189.11\times10^{-8}$ & 800000.10 & 2.51 & 2.46\\
    \hline
    \end{tabular}
    \label{tab:contion_num}
\end{table}

\begin{table}[h!]
	\caption{Numerical verification of validity and sharpness of the inequality $\frac{\norm{W}_{2}}{\left(\sqrt{\frac{12 }{5}}\left(\frac{2c}{\delta z^2} +\frac{c}{6} \right)\delta v\right)} < 1$ in Lemma \ref{lemma:wless1} for various values of $\delta z$ and $\delta v$.}
	\begin{tabular}{|c||P{1cm}|P{1.5cm}|P{1.5cm}|P{1.5cm}|P{1.5cm}|P{1.5cm}|P{1.5cm}|P{1.5cm}|P{1.8cm}}
    \hline
        \backslashbox{$\delta z$}{$\delta v$} & $10^{-8}$ & $10^{-7}$ & $10^{-6}$ & $10^{-5}$ & $10^{-4}$ & $10^{-3}$ & $10^{-2}$ & $10^{-1}$ \\
		\hline\hline
		$1/8$ & 0.9140 & 0.9140 & 0.9140 & 0.9140 & 0.9140 & 0.9140 & 0.9140 & 0.9140 \\
		\hline
		$1/16$ & 0.9543 & 0.9543 & 0.9543 & 0.9543 & 0.9543 & 0.9543 & 0.9543 & 0.9543 \\
		\hline
		1/32 & 0.9647 & 0.9647 & 0.9647 & 0.9647 & 0.9647 & 0.9647 & 0.9647 & 0.9647 \\
		\hline
		1/64 & 0.9673 & 0.9673 & 0.9673 & 0.9673 & 0.9673 & 0.9673 & 0.9673 & 0.9673 \\
		\hline 
		1/128 & 0.9680 & 0.9680 & 0.9680 & 0.9680 & 0.9680 & 0.9680 & 0.9680 & 0.9680 \\
		\hline
		1/256 & 0.9681 & 0.9681 & 0.9681 & 0.9681 & 0.9681 & 0.9681 & 0.9681 & 0.9681 \\
		\hline 
		1/512 & 0.9682 & 0.9682 & 0.9682 & 0.9682 & 0.9682 & 0.9682 & 0.9682 & 0.9682 \\
		\hline
		1/1024 & 0.9682 & 0.9682 & 0.9682 & 0.9682 & 0.9682 & 0.9682 & 0.9682 & 0.9682 \\
		\hline
		1/2048 & 0.9682 & 0.9682 & 0.9682 & 0.9682 & 0.9682 & 0.9682 & 0.9682 & 0.9682 \\
		\hline
		1/4096 & 0.9682 & 0.9682 & 0.9682 & 0.9682 & 0.9682 & 0.9682 & 0.9682 & 0.9682 \\
		\hline
	 	\end{tabular}
	\label{table:norm}
\end{table}

\par Note that in Lemma \ref{lemma:wless1}, the expression of an upper bound of $\norm{W}_2$ is derived in terms of $\delta z$, $\delta v,$ and $c$. To investigate the sharpness of this bound, another numerical experiment is performed. Table \ref{table:norm} reports the ratio of $\norm{W}_2$ and upper bound for the various values of $\delta z$ and $\delta v$. The computed ratios are found ranging from $91\%$ to $97\%$ corresponding to the chosen values of $\delta z$ and $\delta v$. This indicates that the proposed upper bound on $\norm{W}_2$ is significantly sharp, which ensures the sharpness of the upper bound on $\kappa(I+W)$ derived in Theorem \ref{theorem:cond}.

\section{Conclusions and future directions}\label{sec:conclu}
A matrix method approach has been developed to establish the stability of Crank-Nicolson compact scheme for convection-diffusion equations under certain assumptions. The application of Gerschgorin Circle Theorem and the expression of inverse of tridiagonal Toeplitz matrix has facilitated in verifying those assumptions. Illustrations have been provided to validate those assumptions for larger range of discretization parameters. Since computation of amplification matrix requires the inversion of a matrix, an upper bound on the condition number of that matrix is derived. Numerical examples have been considered to study the sharpness of that proposed upper bound. As a future work, the proposed methodology may be investigated for variable coefficient problems, multi-dimensional problems, and for system of PDEs etc.\\

\noindent {\bf Conflict of interest:} The authors declare that they have no conflict of interest.

\bibliographystyle{elsarticle-num}
\bibliography{references}
\end{document}